%
%
%
%
%

\documentclass[natbib]{svjour3}                       
\smartqed  
\usepackage{lineno,hyperref}
\usepackage{amsmath}
\usepackage[tight,footnotesize]{subfigure}
\modulolinenumbers[5]
\usepackage{graphicx}
\usepackage{epstopdf}
\usepackage{relsize}
\usepackage{amssymb}
\newcommand*{\QEDA}{\hfill\ensuremath{\square}}%
\usepackage{mathptmx}      
%
%
%
%
\begin{document}

\title{Performance analysis of an unreliable $M/G/1$ retrial queue with two-way communication}

\titlerunning{Unreliable $M/G/1$ retrial queue with two-way communication}        

\author{Muthukrishnan Senthil Kumar	\and
        Aresh Dadlani	\and 
        Kiseon Kim 	
}



\date{Received: date / Accepted: date}

\maketitle

\begin{abstract}
Efficient use of call center operators through technological innovations more often come at the expense of added operation management issues. In this paper, the stationary characteristics of an $M/G/1$ retrial queue is investigated where the single server, subject to active failures, primarily attends incoming calls and directs outgoing calls only when idle. The incoming calls arriving at the server follow a Poisson arrival process, while outgoing calls are made in an exponentially distributed time. On finding the server unavailable (either busy or temporarily broken down), incoming calls intrinsically join the virtual orbit from which they re-attempt for service at exponentially distributed time intervals. The system stability condition along with probability generating functions for the joint queue length distribution of the number of calls in the orbit and the state of the server are derived and evaluated numerically in the context of mean system size, server availability, failure frequency and orbit waiting time.	
\keywords{Retrial queueing system \and Server breakdown \and Coupled switching \and Performance evaluation \and Steady-state distribution}
\subclass{60K25 \and 62N05}
\end{abstract}

\section{Introduction}
\label{section1} 
Blended call centers have recently evolved as an effective and profitable communication asset in bridging companies and their customers. Unlike conventional call centers, such modern communication systems are capable of managing a mixture of both, inbound and outbound call operations that require instant service \citep{Bhulai2003, Aksin2007}. An outgoing call is initiated by the server only when no incoming call is in the system. This feature, commonly referred to as \emph{coupled switching} or \emph{two-way communication}, yields higher productivity by reducing the idle time experienced by the serving operator \citep{Artalejo2012, Legros2017}. Moreover, incoming calls that find the server busy enter a virtual orbit and tend to retry for service after some random time \citep{Artalejo2008}. As a result, in-depth analysis of the influence of retrying customer calls on the dynamics of coupled switching in call centers is of great significance not only to the research community, but also serves as guidelines to statistical practitioners, network managers and system administrators. 

Current advancement in scale and scope of call centers as socio-technical systems has initiated the need for formulation and analysis of more refined queueing models. The range of seminal works dedicated to coupled switching in the retrial queueing literature is relatively diverse (see \cite{Artalejo2010b} for a comprehensive overview). In the study conducted by \cite{Choi1995}, some expected performance measures for an $M/G/1/K$ priority retrial queue with coupled switching were derived under the assumption that incoming and outgoing calls follow the same service time distribution. Nevertheless, such an assumption limits the practicality of the model as customers may have different service needs. Although \cite{Artalejo2010} derived the first partial moments for an $M/G/1/1$ retrial queue model with general service time distributions using mean value analysis, it cannot be used to obtain the stationary distribution and factorial moments. Comprehensive analysis of the $M/M/1/1$ retrial queue with coupled switching and different service time distributions for single and multiple server cases have been reported by \cite{Artalejo2012}. The work was further extended to incorporate multiple types of outgoing calls by \cite{Sakurai2015}, for which the joint stationary distribution of the number of calls in the orbit and the state of the server were obtained both, asymptotically and recursively. Furthermore, \cite{Artalejo2013} proposed an embedded Markov chain approach to study the steady-state behavior of a couple-switched $M/G/1$ retrial queue with tailed asymptotic analysis of number of customers residing in the orbit. Nonetheless, the server may undergo multiple failures and resume service upon repair (\cite{Krishnamoorthy2014}). Despite its prevalence in practice, research efforts to address server failure in blended call center system models are scarce.

The reliability of an $M/G/1$ retrial queue with only inbound calls and server breakdowns was investigated by \cite{Wang2001}. In another related work by \cite{Martin1995}, an $M/G/1$ service system model with two types of impatient units was exhaustively analyzed. The equilibrium balking strategies of customers in the $M/M/1$ queue with set-up times, breakdowns and reparis were scrutinized by \cite{Chen2015}. Moreover, \cite{Ouazine2016} reported a functional approximation of the stationary performance of the $M1,M2/G1,G2/1$ retrial queue with two-way communication and finite orbit capacity. Perfect and imperfect repair of a single server $M/M/1/1$ retrial queue with only incoming customers were modeled by \cite{Chang2017}. A more recent work by \cite{Phung2017} successfully identified higher order moments for single server retrial queues with set-up time using Taylor series method. To our best knowledge, however, explicit reliability indices for an unreliable $M/G/1$ retrial queue with two-way communication have not yet been analytically derived in modeling service systems. Hence, continuous-time analytical characterization of an unreliable single server retrial queue with coupled switching is imperative from the viewpoint of both, queueing as well as reliability analysis.

The foremost goal of this letter is to study the impact of server failure on the steady-state performance of the $M/G/1$ queue with two-way communication having an orbit with infinite capacity and generally distributed server repair time. In particular, we obtain the system stability condition using the embedded Markov chain technique, followed by the supplementary variable approach to obtain in closed-form the probability generating functions (pgfs) for incoming calls in orbit and in the system, followed by their second order moments. The numerical simulations conducted for various performance metrics of interest corroborate the theoretical findings of the proposed system model.

The rest of the paper is organized as follows: Sect.~\ref{section2} is dedicated to the description and mathematical formulation of the unreliable $M/G/1$ retrial queue with two-way communication. In Sect.~\ref{section3}, the necessary and sufficient condition for system stability is presented, followed by derivation of the corresponding steady-state distribution. Various system performance measures as well as reliability indices are derived in Sect.~\ref{section4}. Numerical simulation results are discussed in Sect.~\ref{section5}. Finally, Sect.~\ref{section6} concludes the paper with suggestions for potential future research directions.

\section{System model formulation}
\label{section2} 
We consider a single server retrial queue in which primary inbound calls follow a Poisson arrival process with rate $\lambda$. If the server is idle, an outgoing call is initiated after an exponentially distributed time with rate $\alpha$. As in reality, the time taken to serve incoming and outgoing calls is assumed to be different. If an incoming call finds the server busy, it then enters the orbit and re-attempts to seek service after an exponentially distributed time with rate $\nu$. Otherwise, the incoming call commences service immediately. Since the server may breakdown while serving calls, without loss of generality, we assume that the lifetime of the server follows an exponential distribution with rates $\beta_1$ and $\beta_2$ during the service of inbound and outbound calls, respectively. On failure, the server is instantly sent for repair which has a generally distributed time. The corresponding state transition diagram of the proposed model is given in \figurename{~\ref{fig1}}, where $n \geq 0$ is the number of incoming calls waiting in the orbit.
\begin{figure}[!t]
	\centering
	\includegraphics[width=3.6 in]{./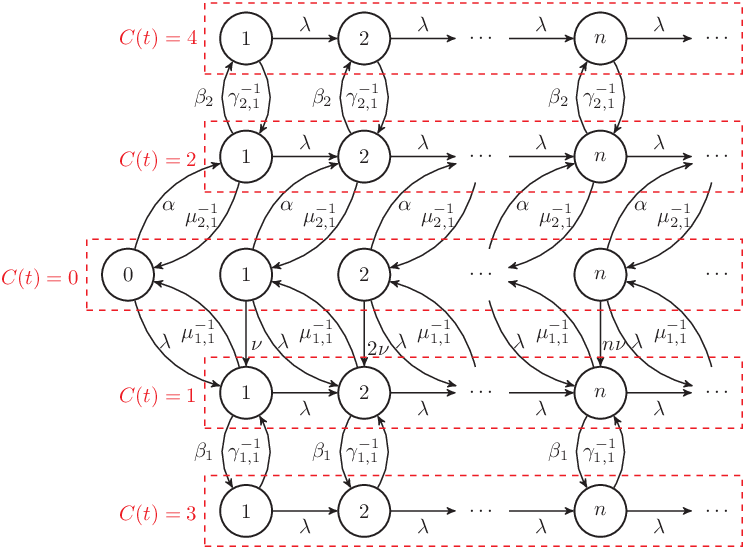}
	\caption{State transitions of the proposed system model with each row highlighting the state of the server.}
	\label{fig1}
\end{figure}

For the sake of consistency, we define $i \in \{1,2\}$ to differentiate between incoming and outgoing calls. Henceforth, $i=1$ refers to incoming calls, while $i=2$ indicates outgoing calls. Let $S_i(x)$ and $R_i(x)$ be the cumulative distributions of service and repair times of $i$-type calls, respectively. Similarly, let $s_i(x)$ and $r_i(x)$ denote respectively, the probability density functions of service and repair times of $i$-type calls. The Laplace transforms of the service and repair times for each type of call are denoted as $\tilde{S}_i(\theta)$ and $\tilde{R}_i(\theta)$, respectively. We also define $S^o_i(x)$ and $R^o_i(x)$ as the remaining service and repair times, respectively. Moreover, let $\mu_{i,k}$ and $\gamma_{i,k}$ denote the $k^{th}$ moment of service and repair times, respectively. In what follows, the arrival flows of incoming calls, outgoing calls, service time, repair time, and intervals between successive re-attempts are all assumed to be mutually independent. Finally, let $N(t)$ be the number of incoming customer calls in orbit and $M(t)$ be the total number of customers in the system at time $t$. We now define the state of the server, denoted by $C(t)$, to be as follows:
\begin{equation}
	 C(t) =
  \begin{cases}
    0,		& \quad \text{if the server is } idle, \\
    1,		& \quad \text{if the server is } busy~\text{serving an}~incoming~call, \\
    2,		& \quad \text{if the server is } busy~\text{making an}~outgoing~call,\\
    3,		& \quad \text{if the server } fails~\text{while serving an}~incoming~call,\\
    4,		& \quad \text{if the server } fails~\text{while making an}~outgoing~call.
  \end{cases} \nonumber
\end{equation}

For service and repair times that are exponentially distributed, the state transitions for $\{(C(t), N(t)); t \geq 0\}$ on the state space $S\!=\!\{0,1,2,3,4\} \!\times\! \mathbb{Z}_+$ are as shown in \figurename{~\ref{fig1}}, where $\mathbb{Z}_+$ denotes the set of non-negative integers. In the case of generally distributed service and repair times, the Markov process $\{(C(t), N(t), S^o_1(t), S^o_2(t), R^o_1(t), R^o_2(t)); t \geq 0\}$ can be used to describe the state of the system. Based on this generalized definition, the state probabilities are given by:
\begin{align}
&P_{0,n}(t) = \text{Pr}[C(t) = 0, ~N(t) = n]\,, \label{eq1}\\
&P_{1,n}(x,t)\,dx = \text{Pr}[C(t) = 1, ~N(t) = n, ~x < S^o_1(t) \leq x + dx]\,, \label{eq2}\\
&P_{2,n}(x,t)\,dx = \text{Pr}[C(t) = 2, ~N(t) = n, ~x < S^o_2(t) \leq x + dx]\,, \label{eq3}\\
&P_{3,n}(x,y,t)\,dy = \text{Pr}[C(t) = 3, ~N(t) = n, ~S^o_1(t) = x, ~y < R^o_1(t) \leq y + dy]\,, \label{eq4}\\
&P_{4,n}(x,y,t)dy = \text{Pr}[C(t) = 4, ~N(t) = n, ~S^o_2(t) = x, ~y < R^o_2(t) \leq y + dy]\,, \label{eq5}
\end{align}
where $x, y \geq 0$ are time epochs. In (\ref{eq1}), $P_{0,n}(t)$ is the probability of the server being idle while having $n$ calls in the orbit at time $t$. For $i \in \{1,2\}$ in (\ref{eq2}) and (\ref{eq3}), $P_{i,n}(x,t)\,dx$ denotes the joint probability that the server is busy with an $i$-type call during the remaining service time $(x, x + dx)$ and there are $n$ calls residing in the orbit at time $t$. Likewise, for $j \in \{3,4\}$ in (\ref{eq4}) and (\ref{eq5}), $P_{j,n}(x,y,t)\,dy$ refers to the joint probability that at time $t$ there are $n$ calls residing in the orbit, the remaining service time is $x$, and the failed server is fixed within the remaining repair time $(y, y+dy)$ while serving an inbound $(j = 3)$ or an outbound $(j = 4)$ call.

\section{Steady-state distribution}
\label{section3}
In this section, we identify the pgfs of orbit size and number of incoming calls in the system. To do so, we first determine the system stability condition using the following theorem.

\begin{theorem}
The necessary and sufficient condition for system stability is given by the inequality $\lambda \mu_{1,1}(1 + \beta_1 \gamma_{1,1}) < 1$.
\end{theorem}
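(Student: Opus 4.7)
The plan is to construct the embedded Markov chain at the successive departure epochs of incoming calls and then derive the positive-recurrence condition via drift analysis. Sufficiency will come from Foster's criterion and necessity from Kaplan's non-ergodicity condition. The underlying intuition is that, asymptotically in the orbit size, the system reduces to a standard $M/G/1$ queue whose service time is the \emph{generalized} (breakdown-inflated) service of an incoming call, so the expected number of arrivals per such service must be strictly less than one.

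First I would let $\tau_k$ denote the $k$-th departure epoch of an incoming call, measured after all repairs suffered during its service are complete, and set $N_k=N(\tau_k^+)$. The sequence $\{N_k\}$ is an irreducible aperiodic Markov chain on $Z_+$ because the arrival flow is Poisson, retrials and outgoing-call initiations are exponential, and the service/repair distributions are i.i.d. Between $\tau_k$ and $\tau_{k+1}$, the system passes through a (possibly empty) idle/outgoing-service phase followed by exactly one generalized incoming-call service. Writing $A^{(1)}_{k+1}$ and $A^{(2)}_{k+1}$ for the numbers of Poisson arrivals during these two phases and $\xi_{k+1}\in\{0,1\}$ for the indicator that the next incoming call to enter service originates from the orbit, one obtains
\begin{equation*}
N_{k+1}=N_k+A^{(1)}_{k+1}-\xi_{k+1}+A^{(2)}_{k+1}.
\end{equation*}

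Next, the conditional drift $d(n)=E[N_{k+1}-N_k\mid N_k=n]$ would be evaluated term by term. The generalized incoming-call service has mean $\mu_{1,1}(1+\beta_1\gamma_{1,1})$, since during a service of length $s$ the number of breakdowns is Poisson with mean $\beta_1 s$ and each repair contributes mean $\gamma_{1,1}$; thus $E[A^{(2)}_{k+1}\mid N_k=n]=\lambda\mu_{1,1}(1+\beta_1\gamma_{1,1})$, independently of $n$. For the first phase, the next triggering event is the minimum of exponentials with rates $\lambda$, $n\nu$, and $\alpha$, so as $n\to\infty$ the retrial wins almost surely, giving $P(\xi_{k+1}=1\mid N_k=n)\to 1$; moreover, the expected idle duration and the expected number of intervening outgoing-call services both vanish. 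Consequently
\begin{equation*}
\lim_{n\to\infty}d(n)=\lambda\mu_{1,1}(1+\beta_1\gamma_{1,1})-1,
\end{equation*}
and a standard application of Foster's criterion with Lyapunov function $V(n)=n$ yields positive recurrence whenever the stated inequality holds. For necessity, if the inequality fails then $d(n)\ge 0$ for all sufficiently large $n$, the downward jumps of $\{N_k\}$ are bounded by one, and Kaplan's lemma precludes ergodicity.

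The main obstacle will be the bookkeeping for the first phase, namely the idle interval and the chain of outgoing-call services (each with its own potential breakdowns and repairs) that may occur before the next incoming call enters service. One must verify that the expected duration of this phase, the expected number of Poisson arrivals occurring within it, and the probability that a primary arrival or an outgoing-call trigger preempts a retrial are all uniformly controlled in $n$ and vanish as $n\to\infty$; otherwise these terms could distort the asymptotic drift. Once this uniform control is established, the quantity $\lambda\mu_{1,1}(1+\beta_1\gamma_{1,1})$ emerges cleanly as the sole surviving contribution in the limit, closing the proof.
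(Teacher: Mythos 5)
Your proposal is correct and rests on the same toolkit as the paper --- an embedded Markov chain observed at service-completion epochs, Foster's criterion for sufficiency, and the Sennott/Kaplan non-ergodicity condition (justified by downward jumps of at most one) for necessity, with the same key quantity $\mu_{1,1}(1+\beta_1\gamma_{1,1})$ as the mean of the breakdown-inflated service. The one genuine difference is the choice of embedding points: the paper embeds at the completion of \emph{every} generalized service, incoming or outgoing, so that each transition contains exactly one service and the one-step drift is computable in closed form as in (\ref{eq4}) --- a three-way exponential race at rates $\lambda$, $k\nu$, $\alpha$ followed by a single generalized service contributing $\lambda\mu_{1,1}(1+\beta_1\gamma_{1,1})$ or $\lambda\mu_{2,1}(1+\beta_2\gamma_{2,1})$ expected arrivals --- which makes $\limsup_k \eta_k = \lambda\mu_{1,1}(1+\beta_1\gamma_{1,1})-1$ immediate with no asymptotic bookkeeping. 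You instead embed only at incoming-call departures, so a single transition must absorb a geometric number of idle slots and outgoing services before the incoming service; this is exactly the ``main obstacle'' you flag, and it does close as you describe, since the expected number of intervening outgoing services is $\alpha/(\lambda+n\nu)\to 0$ and each contributes at most $\lambda\mu_{2,1}(1+\beta_2\gamma_{2,1})$ expected arrivals, giving the same limiting drift. Two minor points: in your balance equation $A^{(1)}_{k+1}$ should count only the arrivals occurring \emph{during outgoing services} (an arrival during an idle slot does not join the orbit --- it is the $\xi_{k+1}=0$ customer that starts service and ends the phase), though this does not change the limit; and note that with your decomposition $d(n)\ge \lambda\mu_{1,1}(1+\beta_1\gamma_{1,1})-1\ge 0$ actually holds for all $n$ when the inequality fails, so the necessity step goes through exactly as in the paper. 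In short, the paper's embedding buys an exact drift formula at the cost of a slightly larger chain of epochs, while yours buys a cleaner ``asymptotically $M/G/1$'' picture at the cost of uniform-control estimates.
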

\begin{proof}
Let $\hat{X}_n$ be the service completion time of the $n^{th}$ call which includes possible down times (due to server failure) while providing service. For the sufficient condition, we need to prove the ergodicity of $\{L_n ; n \geq 1\}$, where $\{L_n\}$ is an irreducible and aperiodic discrete-time Markov chain of $\{(C(t), N(t), S^o_1(t), S^o_2(t), R^o_1(t), R^o_2(t)); t \geq 0\}$ and is defined as $L_n\!=\! N(\hat{X}_n^+)$. Using Foster's criterion and undertaking the same approach as in \cite{Artalejo2013}, $\{L_n\}$ is positive recurrent if $|\eta_k| < \infty$ and $\lim_{k \to \infty} \sup \{\eta_k\} < 0$ for all $k$, where $\eta_k = E[(L_{n + 1} - L_n) | L_n \!=\! k]$. By conditioning on the identity of the $n^{th}$ call, we arrive at:
\begin{equation}
	\eta_k=\frac{kv [\lambda \mu_{1,1}(1\!+\!\beta_1\gamma_{1,1})\!-\!1]}{\lambda\!+\!kv\!+\!\alpha}\!+\!\frac{\lambda [\lambda \mu_{1,1}(1\!+\!\beta_1\gamma_{1,1})]}{\lambda\!+\!kv\!+\!\alpha}\!+\! \frac{\alpha [\lambda \mu_{2,1}(1\!+\!\beta_2 \gamma_{2,1})]}{\lambda\!+\!kv\!+\!\alpha}\,.
	\label{eq6}
\end{equation}

It is straightforward to observe that for all $k$ values, $\eta_k < \infty$ and $\lim_{k \to \infty} \sup \{\eta_k\} < 0$ if $\lambda \mu_{1,1} (1 + \beta_1 \gamma_{1,1}) < 1$, which proves the sufficiency criteria. 

As pointed out in \cite{Sennott1983}, the non-ergodicity of $\{L_n\}$ can be guaranteed if Kaplan's condition is satisfied, i.e. there exists some $k_0 \!\in\! Z_+$ such that $\eta_k \!\geq\! 0$ for $k \!\geq\!k_0$ and $\eta_k \!<\! \infty$ for all $k \!\geq\! 0$. In our setting, this condition is satisfied as $r_{i,j}=0$ for $j<i-1$, where $P=[r_{i,j}]$ is the one-step transition probability matrix. Hence, $\lambda \mu_{1,1} (1+\beta_1 \gamma_{1,1}) \geq 1 $ implies the non-ergodicity of  $\{L_n; n\geq 1\}$, which completes the proof. \QEDA
\end{proof}

Adopting the supplementary variable technique, the system of balance equations for (\ref{eq1})-(\ref{eq5}) is obtained in terms of the limiting state probabilities as follows:
\begin{align}
&(\lambda + n \nu + \alpha) P_{0,n} = P_{1,n}(0) + P_{2,n}(0)\,, \label{eq7}\\
&\!P'_{1,n}(x) \!=\!(\lambda \!+\! \beta_1)P_{1,n}(x) \!-\! \lambda P_{0,n}s_1(x) \!-\! \lambda P_{1,n - 1}(x) \!-\! (j \!+\! 1)\nu P_{0,n + 1} s_1(x) \!-\! P_{3,n}(x,0)\,, \label{eq8}\\
&\!P'_{2,n}(x) \!=\! (\lambda \!+\! \beta_2)P_{2,n}(x) \!-\! \lambda P_{2,n\!-\!1}(x) \!-\! \alpha P_{0,n} s_2(x) \!-\! P_{4,n}(x,0)\,, \label{eq9}\\
&\frac{\partial}{\partial y} P_{3,n}(x,y) \!=\! -\lambda [P_{3,n\!-\!1}(x,y) \!-\! P_{3,n}(x,y)] \!-\! \beta_1 P_{1,n}(x) r_1(y)\,, \label{eq10}\\
&\frac{\partial}{\partial y} P_{4,n}(x,y) \!=\! -\lambda [P_{4,n\!-\!1}(x,y) \!-\! P_{4,n}(x,y)] \!-\! \beta_2 P_{2,n}(x) r_2(y)\,, \label{eq11}
\end{align}
with the following normalizing condition, where the terms $W(x)$ and $Z(x,y)$ respectively, stand for $P_{1,n}(x) + P_{2,n}(x)$ and $P_{3,n}(x,y) + P_{4,n}(x,y)$:
\begin{equation}
	\mathlarger{‎‎\sum}\limits_{n=1}^{\infty} \Bigg[P_{0,n}+\int_{0}^{\infty} \! W(x)dx + \int_{0}^{\infty}\!\!\int_{0}^{\infty} Z(x,y)dx\, dy\Bigg] = 1 ,
	\label{eq12}
\end{equation}

Taking the Laplace transforms $\mathcal{L\{\cdot\}}$ of (\ref{eq7})-(\ref{eq11}) results in the following marginal generating functions, where notations $\tilde{P}_{i,n}(\theta)$ and $\widetilde{\tilde{P}}_{j,n}(\theta,s)$ are used to denote $\mathcal{L}\{P_{i,n}(x)\}$ and $\mathcal{L}\{\mathcal{L}\{P_{j,n}(x,y)\}\}$, respectively:
\begin{align}
&P_0(z) = \sum\limits_{n=0}^{\infty} P_{0,n}\,z^n\,, \label{eq13}\\
&\tilde{P}_i(z,\theta) = \sum\limits_{n=0}^{\infty} \tilde{P}_{i,n}(\theta)\,z^n\,,\qquad\qquad i \in \{1,2\}\,, \label{eq14}\\
&P_i(z,0) = \sum\limits_{n=0}^{\infty} P_{i,n}(0)\,z^n\,,\qquad\qquad~i \in \{1,2\}\,, \label{eq15}\\
&\widetilde{\tilde{P}}_j(z,\theta,s)= \sum\limits_{n=0}^{\infty} \widetilde{\tilde{P}}_{j,n}(\theta,s)\,z^n\,,\quad~~~j \in \{3,4\}\,, \label{eq16}\\
&\tilde{P}_{j}(z,\theta,0) = \sum\limits_{n=0}^{\infty}\tilde{P}_{j,n}(\theta,0)\,z^n\,,\quad~~~ j \in \{3,4\}\,. \label{eq17}
\end{align}

Subsequently, the following pgfs are obtained through some algebraic manipulations, with function definitions $\phi(z)\triangleq\exp\big(-\int_z^1 \frac{[\lambda (1-\delta_1(u)) + \alpha (1-\delta_2(u))]}{\nu [\delta_1(u)-u]} du\big)$, $\delta_i(\cdot) \triangleq \tilde{S}_i(h_i(\cdot))$ and $h_i(z) \triangleq \lambda + \beta_i - \lambda z - \beta_i \tilde{R}_i(\lambda - \lambda z)$ for $i \!\in\! \{1,2\}$:
\begin{align}
&P_{0}(z) = \frac{1 \!-\! \lambda \mu_{1,1}(1 \!+\! \beta_1 \gamma_{1,1})}{1 \!+\! \alpha \mu_{2,1}(1 \!+\! \beta_2 \gamma_{2,1})} \phi(z)\,, \label{eq18}\\
&\tilde{P}_{1}(z,0) = \frac{[\lambda(1 \!-\! z) \!+\! \alpha(1 \!-\! \delta_2(z))][1 \!-\! \delta_1(z)]}{\big[\delta_1(z)-z\big] h_1(z)} P_0(z)\,, \label{eq19}\\
&\tilde{P}_2(z,0) = \frac{\alpha [1 \!-\! \delta_2(z)]}{h_2(z)} P_0(z)\,, \label{eq20}\\
&\widetilde{\tilde{P}}_3(z,0,0) = \frac{\beta_1 [1 \!-\! \tilde{R}_1(\lambda \!-\! \lambda z)]}{\lambda \!-\! \lambda z} \tilde{P}_1(z,0)\,, \label{eq21}\\
&\widetilde{\tilde{P}}_4(z,0,0) = \frac{\beta_2 [1 \!-\! \tilde{R}_2(\lambda \!-\! \lambda z)]}{\lambda \!-\! \lambda z} \tilde{P}_2(z,0)\,. \label{eq22}
\end{align}

In steady-state, the pgfs of orbit occupancy size, $P(z)$, and system size, $R(z)$, at an arbitrary epoch can now be expressed in terms of the equations derived in (\ref{eq18})-(\ref{eq22}). By ignoring the non-zero probability of server failure, it is straightforward to show that the following results are in complete agreement with \cite{Artalejo2013}: 
\begin{align}
P(z) =& \,P_0(z) + \tilde{P}_1(z,0) + \tilde{P}_2(z,0) + \widetilde{\tilde{P}}_3(z,0,0) + \widetilde{\tilde{P}}_4(z,0,0) \nonumber\\
=& \,\frac{\lambda (1-z) + \alpha[1-\delta_2(z)]}{\lambda[\delta_1(z)-z]} P_0(z)\,, \label{eq23}\\
R(z) =& \,P_0(z) + z[\tilde{P}_1(z,0) + \tilde{P}_2(z,0) + \widetilde{\tilde{P}}_3(z,0,0) + \widetilde{\tilde{P}}_4(z,0,0)] \nonumber\\
=& \,\frac{(\lambda - \lambda z) \delta_1(z) + \alpha[1 - \delta_2(z)]}{\lambda[\delta_1(z) - z]} P_0(z)\,. \label{eq24}
\end{align}

\section{Performance and reliability analysis}
\label{section4}
In this section, some performance and reliability metrics for the queueing system under study are discussed. Specifically, our performance analysis involves finding the expected number of calls in the system and expected waiting time in the orbit, while server availability and server failure frequency characterize the reliability indices. To improve readability, we introduce and use the notations $\rho_1 \!=\! (1 + \beta_1 \gamma_{1,1})$, $\rho_2 \!=\! (1 + \beta
_2 \gamma_{2,1})$, $\sigma_1 \!=\! \lambda \rho_1 \mu_{1,1}$, and $\sigma_2 \!=\! \alpha \rho_2 \mu_{2,1}$ throughout this section.

\subsection{Expected number of customer calls in the system}
This measure accounts for the mean number of incoming calls retrying for service, either due to server failure or it being busy, as well as those being served by the server. This is readily obtained by differentiating the pgfs in (\ref{eq23}) and (\ref{eq24}) before evaluating them at $z\!=\! 1$. As a result, the equation given in (\ref{eq23}) yields the first moment of the orbit size as follows:
\begin{align}
E[N] =& P'(1) \nonumber\\
 =& \frac{\lambda^2(\beta_1 \mu_{1,1} \gamma_{1,2} + \rho_1^2 \mu_{1,2})}{2(1 - \sigma_1)} \!+\! 
\frac{\lambda \alpha (\beta_2 \mu_{2,1} \gamma_{2,2} + \rho_2^2 \mu_{2,2})}{2(1 + \sigma_2)} \!+\! 
\frac{\lambda (\sigma_1 + \sigma_2)}{\nu (1 - \sigma_1)}\,, \label{eq25}
\end{align}

Similarly, the mean system size resulting from (\ref{eq24}) is given as:
\begin{align}
E[M] = R'(1) = P'(1) + \frac{\sigma_1 + \sigma_2}{1 + \sigma_2}\,. \label{eq26}
\end{align}

By differentiating (\ref{eq23}) and (\ref{eq24}) twice with respect to $z$ and evaluating them at $z\!=\! 1$, we also obtain the second order moment of the orbit size as follows:
\begin{align}
E[N^2] = P''(1) = M_1+M_2+M_3\,,  \label{eq27}
\end{align}
where $M_1$, $M_2$, and $M_3$ are derived to be:
\begin{align}
M_1 = &\frac{\lambda^2(\sigma_1 + \sigma_2)^2 }{\nu^2(1 - \sigma_1)(1 + \sigma_2)} +
\frac{\lambda^3 (1 + \sigma_2)(\beta_1 \mu_{1,1} \gamma_{1,2} + \rho_1 \mu_{1,2})}{2 \nu (1 - \sigma_1) (1 + \sigma_2)} \qquad\qquad\qquad \nonumber\\
&\qquad\qquad\qquad\qquad+ \frac{\lambda^2 \alpha (1 - \sigma_1)(\beta_2 \mu_{2,1} \gamma_{2,2} + \rho_2 \mu_{2,2})}{2 \nu (1 - \sigma_1)(1 + \sigma_2)}\,, \label{eq28} \\
M_2 = &\frac{\lambda^3(\beta_1 \gamma_{1,3} \mu_{1,1} + 2 \rho_1 \beta_1 \gamma_{1,2} \mu_{1,2} + \rho_1^3 \mu_{1,3})}{3(1 - \sigma_1)} +
\frac{\lambda^4 (\beta_1 \mu_{1,1}\gamma_{1,2} + \rho_1 \mu_{1,2})^2}{2(1 - \sigma_1)^2} \nonumber\\
&\!+\! \frac{\lambda^3 (\beta_1 \mu_{1,1} \gamma_{1,2} + \rho_1 \mu_{1,2})}{2(1 - \sigma_1)^2} \Bigg(\frac{\sigma_1 + \sigma_2}{\nu} + \frac{\alpha\big(1 - \sigma_1(\beta_2 \mu_{2,1}\gamma_{2,2} + \rho_2 \mu_{2,2})\big)}{2(1 + \sigma_2)}\Bigg) \nonumber\\
&\!+\! \frac{\lambda^2 \sigma_1}{1 - \sigma_1} \Bigg(\frac{\alpha \rho_2\big(1 - \sigma_1(\beta_2 \gamma_{2,3} \mu_{2,1} + 2 \rho_2 \beta_2 \gamma_{2,2} \mu_{2,2} + \rho_2^3 \mu_{2,3})\big)}{3 (1 + \sigma_2)}\Bigg) \nonumber\\
&\!+\! \frac{\lambda^2 \sigma_1 (\sigma_1 + \sigma_2)(\beta_2 \mu_{2,1} \gamma_{2,2} + \rho_2 \mu_{2,2})}{\nu (1- \sigma_1)} + \frac{M_4}{\nu (1 - \sigma_1)}\,, \label{eq29}\\
M_3 = &\frac{\lambda^2 \alpha( 1 - \sigma_1)(\beta_1 \gamma_{1,3} \mu_{1,1} + 2 \rho_1 \beta_1 \gamma_{1,2} \mu_{1,2} + \rho_1^3 \mu_{1,3})}{3(1+\sigma_2)} \nonumber\\
&\!+\! \frac{\lambda^2 \alpha (\sigma_1 + \sigma_2)(\beta_2 \mu_{2,1} \gamma_{2,2} + \rho_1 \mu_{2,2})}{\nu (1 + \sigma_2)} + \frac{\lambda^2 \alpha \sigma_2 M_4}{\nu (1 - \sigma_1)(1 + \sigma_2)}\,, \label{eq30}
\end{align}
with $M_4$ given as below:
\begin{align}
M_4 = &\frac{(\sigma_1 + \sigma_2)^2}{v} + \frac{\lambda(1 + \sigma_2)(\beta_1 \mu_{1,1} \gamma_{1,2} + \rho_1 \mu_{1,2})}{2} + \frac{\alpha (1 - \sigma_1)(\beta_2 \mu_{2,1} \gamma_{2,2} + \rho_2 \mu_{2,2})}{2}\,. \label{eq31}
\end{align}

\subsection{Expected waiting time in orbit}
Denoted by $W$, the steady-state delay experienced by an incoming customer call in obit depends on the total idle time of the server not serving an incoming call ($W_0$), the total service time (including the server failure time) of the server providing service to an incoming call ($W_1$), and the total service time (including the server failure time) of the server busy with an outgoing call ($W_2$). The probability of an inbound call entering the orbit ($P_w$) is thus, calculated as follows:
\begin{align}
P_w &= \lim_{z \!\to\! 1} \{\tilde{P}_1(z,0) \!+\! \tilde{P}_2(z,0)\ \!+\! \widetilde{\tilde{P}}_3(z,0,0) \!+\! \widetilde{\tilde{P}}_4(z,0,0)\} = \frac{(\sigma_1 + \sigma_2)}{(1 + \sigma_2)}\,. \label{eq32}
\end{align}

Using the above equation and the first moments of the pgfs in (\ref{eq18})-(\ref{eq22}), we derive the mean waiting time in the orbit to be (see \cite{Choi1995}):
\begin{align}
	\text{E}[W] = \text{E}[W_0] + \text{E}[W_1] + \text{E}[W_2] = \text{E}[N]/\lambda\,, \label{eq33}
\end{align}
where $\text{E}[W_0]= P_w/\nu$, $\text{E}[W_1]=\text{E}[N] \text{E}[B_1] + \sigma_1 \text{E}[R_1]$, and $\text{E}[W_2] = \sigma_2 (1 - \sigma_1) \text{E}[R_2] / (1 + \sigma_2) + \sigma_2 \text{E}[W_0]$. Note that the notations $\text{E}[B_i]$ and $\text{E}[R_i]$ represent the mean service time (including failure time) and the mean remaining service time (including failure time) while serving $i$-type calls, which are given as $\mu_{i,1}(1+  \beta_i \gamma_{i,1})$ and $\text{E}[B_i^2]/(2 \text{E}[B_i])$, respectively.

\subsection{Server availability}
Availability is a system characteristic that measures how often the server is available for use, even though it may not be functioning properly. The probability that the server is operational at a given time instant $t$ is defined as its point-wise availability, $A(t)$, and its steady state availability (i.e. $\lim_{t \to \infty} A(t)\!=\!P_{a}$) is  given as:
\begin{align}
P_{a} &= \lim_{z \!\to\! 1} \{P_0(z) \!+\! \tilde{P}_1(z,0) \!+\! \tilde{P}_2(z,0)\} 
= \frac{(1 + \alpha \mu_{2,1})(1 - \sigma_1) + \lambda \mu_{1,1} (1 + \sigma_2)}{(1 + \sigma_2)}\,. \label{eq34}
\end{align}

\subsection{Server failure frequency}
This measure corresponds to the probability that the server fails at time $t>0$ given that it was operating at $t=0$ (see \cite{Kumar2010}). It can be easily shown that the following closed-form expression results from (\ref{eq23}):
\begin{align}
P_f &= \lim_{z \!\to\! 1} \{\beta_1 \tilde{P}_1(z,0) \!+\! \beta_2 \tilde{P}_2(z,0)\} 
= \lambda \mu_{1,1} \beta_1 + \alpha \mu_{2,1} \beta_2  \frac{(1 - \sigma_1)}{(1 + \sigma_2)}\,. \label{eq35}
\end{align}

\section{Numerical Examples and Discussions}
\label{section5}
To illustrate the impact of system parameters on the performance primitives, we present numerical examples for service and repair times with three arbitrary distributions namely, exponential with density function $c_1 e^{-c_1x}$, Erlangian of order two with density function $c_1^2 x e^{-c_1x}$ and hyperexponential given as $a c_1 e^{-c_1 x}\!+\!(1\!-\!a)c_2 e^{-c_2 x}$, where $c_1, c_2\!>\!0$ and $0\!\leq\!a\!\leq\!1$. Throughout this section, we assume $\lambda\!=\!1.2$, $\alpha\!=\!0.4$, $\nu\!=\!1$, $\mu_{2,1}\!=\!0.1$, and $\gamma_{2,1}\!=\!0.2$ to satisfy the ergodic condition of the analytical system. We also consider an $M/G/1$ retrial queue without server failure, i.e. $(\beta_1,\beta_2) = (0,0)$, as in \cite{Artalejo2013} to serve as the baseline scenario for our comparison.


\figurename{~\ref{fig2}} shows the variation in mean system size ($E[M]$) as functions of the inbound arrival rate ($\lambda$), outbound rate ($\alpha$), retrial rate ($\nu$), and inbound service ($\mu_{1,1}$) and repair ($\gamma_{1,1}$) times. As evident in \figurename{\ref{fig2a}}, increase in the number of arriving calls reduces the chance of finding the server active and idle. Consequently, these unattended incoming calls enter the orbit to retry for service thus, increasing the average system size as shown in the figure. In comparison to the failure-free baseline scenario, we note that the system size of our model increases with the failure rate $\beta_1$ as $\lambda$ increases. A similar relationship can be observed in \figurename{\ref{fig2b}} between $E[M]$ and $\alpha$ as well. On the other hand, $E[M]$ steeply decreases initially and gradually stabilizes to some constant value with increase in the rate for service retrial as depicted in \figurename{\ref{fig2c}}. The observation justifies the fact that an incoming call residing in orbit has a higher chance of being served and thus, leaving the orbit if it re-attempts for service more frequently. For lower values of $\beta_1$, a primary incoming call is more probable to find the server available, resulting in a reduced orbit size. The influence of the service and repair times of primary calls on the mean system size are also illustrated in \figurename{\ref{fig2d}} and \figurename{\ref{fig2e}}, respectively. As the average time to serve incoming calls increases, $E[M]$ grows steeper as depicted in \figurename{~\ref{fig2d}}. In other words, longer service times increases the number of incoming calls waiting in the orbit. Likewise, the shorter the server repair time, the more active it would be thus, reducing the system size which is mainly dominated by the orbit length. As seen in \figurename{\ref{fig2e}}, depending on the distribution type, the difference in $E[M]$ increases substantially with increase in the server repair time while serving an inbound call.
\begin{figure}[!t]
  \begin{center}
    \subfigure[Mean system size versus $\lambda$.]{\label{fig2a}\includegraphics[scale=0.55]{./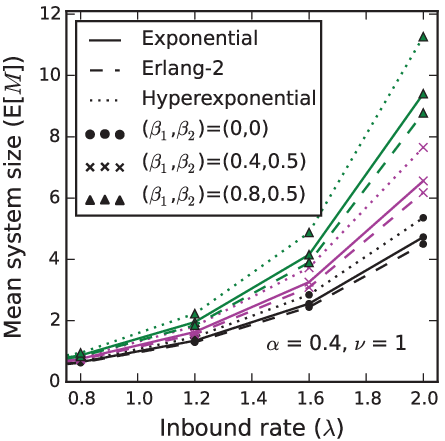}}~
    \subfigure[Mean system size versus $\alpha$.]{\label{fig2b}\includegraphics[scale=0.55]{./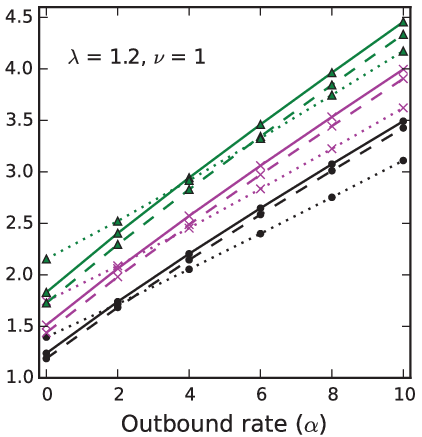}}~
   \subfigure[Mean system size versus $\nu$.]{\label{fig2c}
\includegraphics[scale=0.55]{./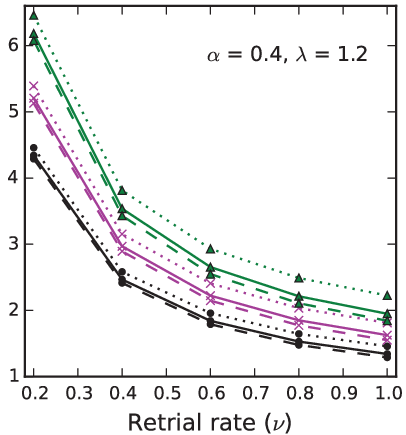}}\\
   \subfigure[Mean system size versus $\mu_{1,1}$.]{\label{fig2d}\includegraphics[scale=0.55]{./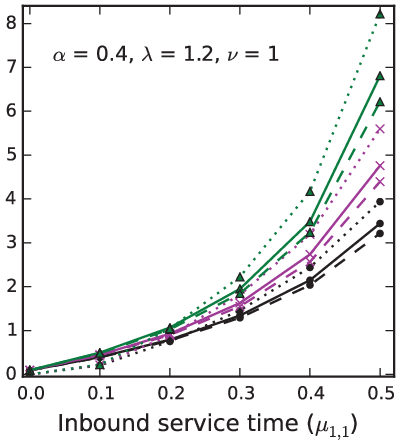}}~~~
   \subfigure[Mean system size versus $\gamma_{1,1}$.]{\label{fig2e}
\includegraphics[scale=0.55]{./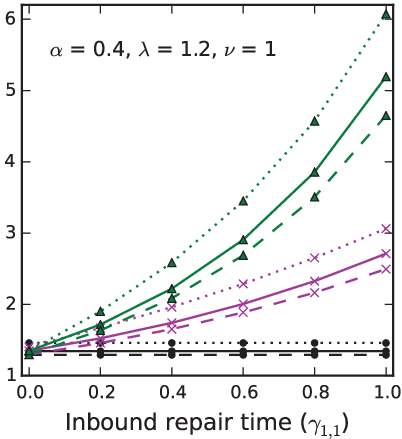}}
  \end{center}
  \caption{Mean system size ($E[M]$) against incoming arrival rate ($\lambda$), outgoing rate ($\alpha$), retrial rate ($\nu$), inbound service time ($\mu_{1,1}$), and inbound repair time ($\gamma_{1,1}$).}
  \label{fig2}
\end{figure}

The mean orbit waiting time of incoming calls is given in \figurename{~\ref{fig3}} as functions of parameters $\lambda$ and $\beta_1$. With respect to the benchmark, we observe the impact of server failure while serving incoming calls on the gradual increase in $E[W]$ in \figurename{\ref{fig3a}}. As the value of $\lambda$ rises from 1 to 2, $E[W]$ shows a percentage increase of almost $133\%$ for $(\beta_1,\beta_2) = (0.4,0.5)$ and nearly $111\%$ for $(\beta_1,\beta_2) = (0.8,0.5)$. Moreover, as the value of $\beta_1$ increases to $2$ in \figurename{\ref{fig3b}}, we observe that the average orbit delay grows exponentially with increase in $\mu_{1,1}$.

\figurename{~\ref{fig4}} shows the impact of $\lambda$ and $\beta_1$ on server availability. It is noteworthy that all three distributions exhibit the same results for different values of $\beta_1$ and $\beta_2$. Therefore, they have been demonstrated using a single plot. In absence of server failure, $P_a$ is obviously always equal to 1. However, as $\beta_1$ increases, the availability of the server to incoming calls reduces with rise in $\lambda$. For instance, at $\lambda\!=\!1.6$ in \figurename{~\ref{fig4a}}, $P_a$ falls by slightly less than $1\%$ as $\beta_1$ goes from 0.4 to 0.8. \figurename{\ref{fig4b}} further portrays the prominent effect of $\beta_1$ on server availability under varying first moment of service and repair times. Note that there is a steeper fall in $P_a$ as the service time of incoming calls increases. For instance, given $\beta_1=1$, $P_a$ drops by approximately $15\%$ as $\mu_{1,1}$ increases by a factor of 5. This measure further deteriorates by around $31.8\%$ as the value of $\beta_1$ rises to 2. The figure reveals that the probability of finding the server available is higher when $\mu_{1,1} < \gamma_{1,1}$ and is more likely to reduce with increase in the inbound service time $\mu_{1,1}$.
\begin{figure}[!t]
  \begin{center}
    \subfigure[Mean orbit waiting time versus $\lambda$.]{\label{fig3a}\includegraphics[scale=0.73]{./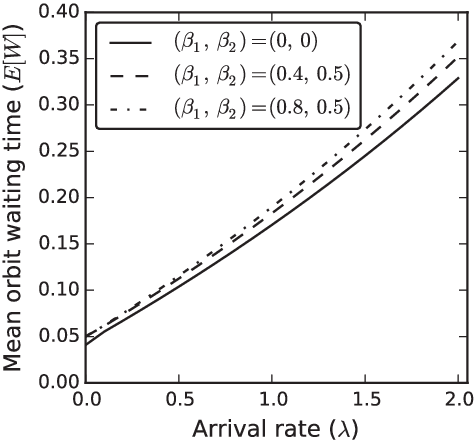}}~~~
    \subfigure[Mean orbit waiting time versus $\beta_1$.]{\label{fig3b}\includegraphics[scale=0.73]{./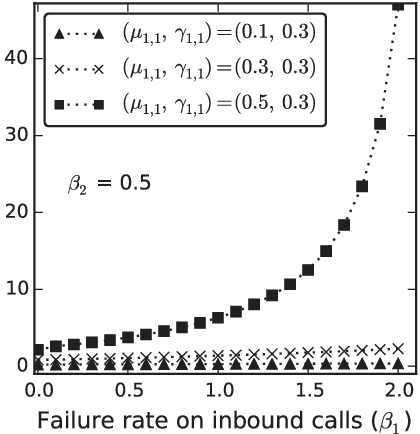}}
  \end{center}
  \caption{Expected waiting time in orbit ($E[W]$) versus inbound arrival rate ($\lambda$) and server failure rate while serving such calls ($\beta_1$).}
  \label{fig3}
\end{figure}
\begin{figure}[!t]
  \begin{center}
    \subfigure[Server availability versus $\lambda$.]{\label{fig4a}\includegraphics[scale=0.73]{./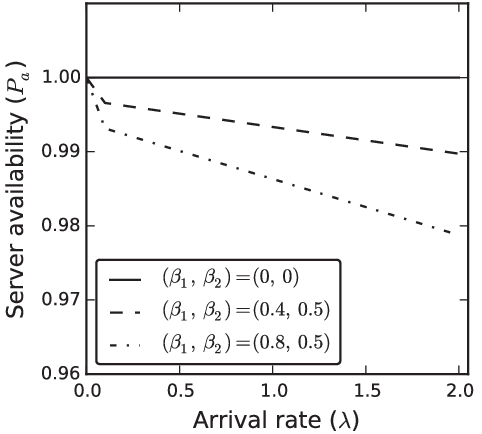}}~~~
    \subfigure[Server availability versus $\beta_1$.]{\label{fig4b}\includegraphics[scale=0.73]{./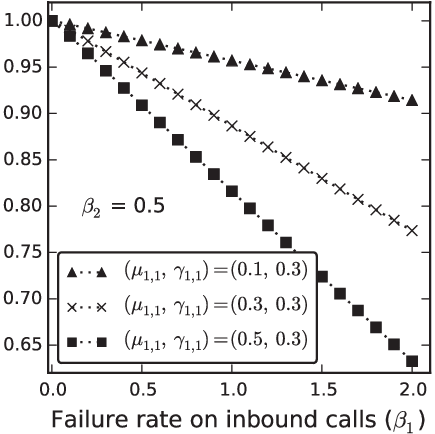}}
  \end{center}
  \caption{Server availability ($P_a$) as a function of the inbound arrival rate ($\lambda$) and the server failure rate while serving such calls ($\beta_1$).}
  \label{fig4}
\end{figure}

Similarly, \figurename{~\ref{fig5a}} depicts the server failure frequency as functions of parameters $\lambda$ and $\beta_1$. Apparent from (\ref{eq35}), we observe that $P_f$ monotonically increases with the number of incoming calls in our model. Additionally, at $\lambda\!=\!2$, as $\beta_1$ increases from 0.4 to 0.8, the value of $P_f$ rises drastically by over $74\%$. For various values of $(\mu_{1,1},\gamma_{1,1})$, \figurename{\ref{fig5b}} shows that $P_f$ constantly increases with $\beta_1$ and is higher when $\mu_{1,1}$ is more than $\gamma_{1,1}$. In comparison to \figurename{\ref{fig4}}, such behavior is not far from expectation as the measures $P_f$ and $P_a$ are inversely related.
\begin{figure}[!t]
  \begin{center}
    \subfigure[Server failure frequency versus $\lambda$.]{\label{fig5a}\includegraphics[scale=0.73]{./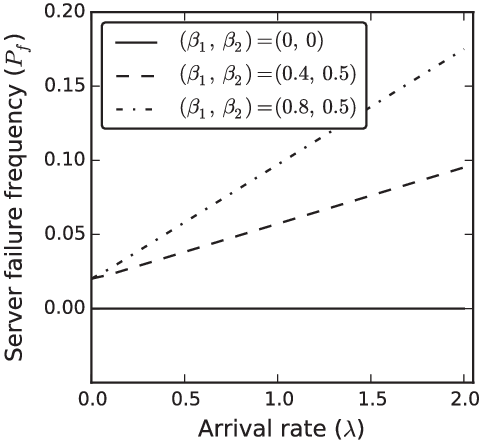}}~~~
    \subfigure[Server failure frequency versus $\beta_1$.]{\label{fig5b}\includegraphics[scale=0.73]{./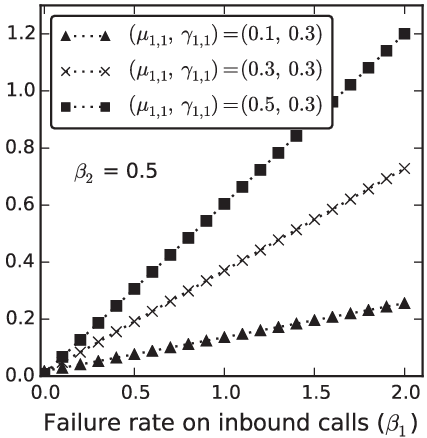}}
  \end{center}
  \caption{Server failure frequency ($P_f$) as a function of the inbound arrival rate ($\lambda$) and the server failure rate while serving such calls ($\beta_1$).}
  \label{fig5}
\end{figure}


\section{Conclusion}
\label{section6}
In this paper, we have conducted an exhaustive steady-state analysis of the $M/G/1$ retrial queue incorporated with two-way communication and the possibility of server failure. Having immediate applications in blended call centers, our study of the stationary characteristics provided explicit expressions for the joint distribution of the server state and the expected number of incoming customer calls in the orbit. Results from extensive numerical simulations were provided for various performance measures to validate and compare our findings with that of a baseline system with no server breakdown. A promising follow-up on this work would be an extended analysis involving an unreliable multi-server retrial queueing model with prioritized classes of incoming service requests. The consideration of customer impatience in conjunction with service prioritization is yet another interesting direction for further exploration. 




\end{document}